\documentclass[11pt,a4paper]{article}

\usepackage[margin=2.5cm]{geometry}
\usepackage{amsmath,amsthm,amssymb}
\usepackage{booktabs}
\usepackage{graphicx}
\graphicspath{{figures/}{../figures/}}
\usepackage{xcolor}
\usepackage{url}
\usepackage{natbib}
\usepackage{microtype}
\usepackage{setspace}
\usepackage{caption}
\usepackage{subcaption}
\usepackage{array}
\usepackage{multirow}
\usepackage{placeins}
\usepackage{float}

\newtheorem{proposition}{Proposition}
\newtheorem{corollary}{Corollary}[proposition]
\newtheorem{remark}{Remark}

\begin{document}

\begin{center}
\vspace*{1.5cm}
{\LARGE\bfseries
Von Economo neurons enable reliable social skill acquisition\\[0.35em]
in recurrent spiking neural networks:\\[0.35em]
a computational account with clinical predictions}

\vspace{1.6cm}
{\large Esila Keskin}\\[0.25em]
{\normalsize
School of Computing and Creative Technologies,\\
University of the West of England, Bristol, UK\\[0.15em]
\texttt{Esila2.keskin@live.uwe.ac.uk}}

\vspace{1.4cm}
{\large\bfseries Abstract}
\end{center}

\noindent
Von Economo neurons (VENs), large, fast-projecting bipolar cells
concentrated in the anterior cingulate and fronto-insular cortex, are
selectively lost in early behavioural-variant frontotemporal dementia (bvFTD)
and are reduced in number during development in autism spectrum conditions (ASC).
Despite the clinical significance of these observations, the computational
role of VENs in social \emph{learning} has remained unexplained.
Here we train a biologically motivated spiking neural network (SNN),
the VENCircuit, embedding a small population of VEN-like projection neurons
($K = 40$, 2\% of total) in a recurrent pyramidal circuit,
across 50 matched random initialisations with and without VENs.
We use the selective disruption of VENs in social cognitive disorders as the
biological motivation for the architectural question studied here; the network
is trained on a controlled binary classification task using burst-modulated
Poisson spike statistics as a proxy for stimulus class, and we make no claim
to model social cognition directly.
VEN-intact networks converged on this task
in 49 of 50 cases (98\%), while VEN-ablated networks converged in only 35 of 50
(70\%) (Fisher's exact OR $= 21.0$, 95\% CI 2.7--167, $p = 8.7 \times 10^{-5}$).
Failed ablated networks showed complete absence of learning throughout all
training epochs, inconsistent with a simple speed-of-learning account.
Phase-ablation experiments revealed that VEN removal is most disruptive during
mid-training (epochs 5--25), when a co-adaptive dependency on VEN activity
forms in the pyramidal circuit.
We derive a formal mathematical account showing that VENs provide a
direct gradient pathway that is structurally immune to the Jacobian product
instabilities that affect the recurrent pyramidal circuit; spectral norm
measurements confirm that all networks initialise near the critical gradient-flow
boundary ($\lVert W_{pp}^{(0)} \rVert_2 \approx 0.078$ uniformly,
giving $\alpha \approx 1.028$), making this structural advantage
architecturally ubiquitous rather than seed-specific.
Inference-time VEN ablation in trained networks caused a statistically
significant performance drop (Wilcoxon $p = 0.022$), with heterogeneous
effects ranging from no change (16/20 networks) to catastrophic collapse
(one network: accuracy $0.989 \to 0.620$), indicating that a subset of
networks develop VEN-dependent output representations during training.
Our results suggest VENs function as acquisition scaffolds whose developmental
absence produces stochastic learning failure, a computational analogue
of the variable social skill acquisition observed in ASC, and provide
a mechanistic SNN account of VEN function with falsifiable predictions
for patient, organoid, and electrophysiology studies.

\vspace{0.7cm}
\noindent\textbf{Keywords:}
Von Economo neurons,
spiking neural networks,
social cognition,
gradient flow,
training stability,
frontotemporal dementia,
autism spectrum conditions,
residual connections,
backpropagation through time

\vspace{0.7cm}
\noindent\textbf{Significance statement:}\\
We show that a biologically motivated spiking neural
network reveals why a specific, rare cell type, Von Economo neurons, is
necessary not for \emph{performing} social cognition, but for
\emph{reliably learning} it.
VENs constitute only 2\% of the network yet confer a 21-fold increase in
training convergence odds.
A formal mathematical account explains this asymmetry through gradient-flow
theory: VENs provide a direct pathway for learning signals that bypasses
the recurrent dynamics of the pyramidal circuit.
These findings predict that the social cognitive consequences of VEN loss
depend critically on its developmental timing, generating testable predictions
that distinguish autism spectrum conditions from frontotemporal dementia.

\section{Introduction}
\label{sec:intro}

Social cognition, the capacity to infer and respond to the mental states
of others, depends on circuits in the anterior cingulate cortex (ACC) and
fronto-insular cortex (FIC) that are disproportionately affected in two
contrasting conditions: behavioural-variant frontotemporal dementia (bvFTD)
and autism spectrum conditions (ASC).
Both disorders involve prominent social cognitive dysfunction, yet their
developmental trajectories are inverted.
bvFTD strikes typically in middle adulthood, causing progressive erosion of
social behaviour in individuals who previously had intact social skills,
and is diagnosed using internationally agreed consensus criteria
\citep{rascovsky2011}.
ASC, by contrast, is characterised by atypical development of social cognition
from early life~\citep{baroncohen1985}, with high variability in whether and
how fully social skills are acquired.

A cellular feature shared by both conditions is the disruption of
\emph{Von Economo neurons} (VENs): large, fast-projecting bipolar cells
first characterised by Von Economo~\citeyearpar{voneconomo1926} and subsequently
shown by systematic morphometric analysis to be unique to humans and
great apes~\citep{nimchinsky1999, allman2011voneconomo}.
In bvFTD, VEN loss is among the earliest and most selective neuropathological
findings, preceding broader neurodegeneration by years~\citep{seeley2006early}.
In ASC, post-mortem and imaging studies report reduced VEN density during
development relative to neurotypical controls
\citep{simms2009voneconomo, santos2011social}.
Yet the computational consequences of VEN loss remain poorly understood.
VENs have been hypothesised to support rapid affective signalling, social
intuition, and metabolic modulation of cortical circuits
\citep{allman2005intuition}, but these accounts are largely descriptive
and do not explain the specific vulnerability of social \emph{learning}
in VEN-depleted conditions.

A key unresolved question is whether VENs are necessary for the
\emph{acquisition} of social representations, the \emph{expression} of
representations once acquired, or both.
This distinction matters clinically: developmental VEN reduction should
impair how reliably social skills are learned, while adult VEN loss should
affect the expression of already-learned behaviour.
If VENs gate expression only, early and late loss should produce similar
consequences; if VENs are primarily acquisition circuits, timing determines
the failure mode.

Here we investigate this question computationally using a biologically
motivated SNN model~\citep{maass1997}, the VENCircuit, in which a
small population of VEN-like projection neurons is embedded in a recurrent
pyramidal circuit and trained using surrogate-gradient backpropagation through
time (BPTT;~\citealt{werbos1990bptt, neftci2019surrogate, zenke2018superspike}).
The VEN circuit's known role in social cognitive disorders motivates the
architectural question we address; the model is trained on a controlled binary
classification task using burst-modulated Poisson spike statistics as a
stimulus proxy, and we do not claim to model social cognition directly.
By comparing training outcomes across 50 matched random initialisations with
and without VENs, we measure the effect of VEN presence on the
\emph{reliability} of skill acquisition within this model.

Our central finding is a 21-fold increase in convergence odds
(Fisher's exact $p < 0.001$) in VEN-intact versus VEN-ablated networks.
Failures in the ablated condition are characterised by complete absence of
learning throughout all training epochs, not slower learning.
Phase-ablation experiments reveal that mid-training VEN removal is most
disruptive, consistent with a co-adaptive dependency that forms during
early learning.
We derive a formal gradient-flow account of this asymmetry and test it
with mechanistic experiments, reporting both confirmatory and null results
transparently.

\section{Methods}
\label{sec:methods}

\subsection{Network Architecture: The VENCircuit}
\label{sec:methods:arch}

The VENCircuit consists of $N = 2{,}000$ recurrently connected pyramidal
leaky integrate-and-fire (LIF) neurons~\citep{abbott1999lapicque, burkitt2006}
and $K$ VEN-like neurons, where $K = \lfloor N \cdot 0.02 \rfloor = 40$
in the intact condition and $K = 0$ in the ablated condition.

The pyramidal population receives input via feedforward weights
$W_{ip} \in \mathbb{R}^{N \times d}$ (sparse, fan-in $= 80$) and
communicates recurrently through
$W_{pp} \in \mathbb{R}^{N \times N}$ (sparse Bernoulli mask, connection
probability 0.15, no self-connections).
VENs receive direct feedforward input via
$W_{iv} \in \mathbb{R}^{K \times d}$ with fan-in 8
(consistent with the sparse dendritic arbour of biological
VENs~\citep{allman2011voneconomo}) and project to the output layer via
$W_{vo} \in \mathbb{R}^{C \times K}$.
VENs do \emph{not} receive recurrent input from the pyramidal population.

A winner-take-all (WTA) output layer of $C = 2$ LIF neurons (threshold 0.10,
fixed lateral inhibition $-0.5$) integrates signals from both pathways via
$W_{po} \in \mathbb{R}^{C \times N}$ (pyramidal) and $W_{vo}$ (VEN).
Classification logits are the cumulative spike counts from this output layer,
approximated in the theoretical analysis (Section~\ref{sec:theory}) as:
\begin{equation}
    \hat{\mathbf{y}}
    = W_{po}\,\bar{\mathbf{s}} + W_{vo}\,\bar{\mathbf{v}},
    \quad
    \bar{\mathbf{s}} = \frac{1}{T}\sum_{t=1}^T \mathbf{s}_t,
    \quad
    \bar{\mathbf{v}} = \frac{1}{T}\sum_{t=1}^T \mathbf{v}_t,
    \label{eq:output}
\end{equation}
where $T = 50$ simulation timesteps per trial.
The model was implemented in PyTorch with custom LIF dynamics and an
arctangent (ATan) surrogate gradient~\citep{neftci2019surrogate}
(see Section~\ref{sec:theory} for mathematical details).

\subsection{LIF Neuron Dynamics}
\label{sec:methods:lif}

All populations use the same discrete-time LIF equations with detached reset:
\begin{align}
    \mathbf{u}_t &= \beta\,\mathbf{u}_{t-1}\,(1 - \mathbf{s}_{t-1}^{\rm det})
                   + \mathbf{I}_t, \label{eq:lif_u}\\
    \mathbf{s}_t &= \sigma_{\rm spike}(\mathbf{u}_t - \theta), \label{eq:lif_s}
\end{align}
where $\beta = 1 - 1/\tau$ is the membrane leak factor,
$\theta$ is the firing threshold,
$\mathbf{s}_{t-1}^{\rm det}$ is the detached spike tensor (not in the
gradient graph), and $\sigma_{\rm spike}$ is the ATan surrogate step function:
\begin{equation}
    \sigma_{\rm spike}(u) = \mathbf{1}[u > 0],
    \qquad
    \sigma'(u) = \frac{1}{1 + (\pi u/2)^2}.
    \label{eq:atan_surr}
\end{equation}
Membrane time constants differ by population: $\tau_{\rm pyr} = 20$\,ms
($\beta_{\rm pyr} = 0.95$) and $\tau_{\rm VEN} = 5$\,ms ($\beta_{\rm VEN}
= 0.80$), reflecting the faster response kinetics attributed to VEN
morphology~\citep{allman2011voneconomo}.
Firing threshold: $\theta = 0.5$ for pyramidal and VEN populations;
$\theta_{\rm out} = 0.10$ for the output population.

\subsection{Task and Training}
\label{sec:methods:task}

Networks were trained to classify synthetic burst-modulated Poisson spike
patterns into two social relevance classes.
Class~0 (``high-activity'') consisted of spike trains generated at high
firing rates (35--75\,Hz) with high burst probability (0.50);
class~1 (``low-activity'') consisted of spike trains at low firing rates
(5--25\,Hz) with low burst probability (0.10).
All inputs consisted of 100 Poisson-spiking units over $T = 50$ simulation
timesteps at 1\,ms resolution, with 2\% independent bit-flip noise applied
element-wise.
The dataset comprised 6{,}000 training, 1{,}000 validation, and 2{,}000
test samples generated under a fixed random seed (seed\,=\,42).

Training used the Adam optimiser~\citep{kingma2015adam} with cross-entropy
loss, gradient clipping (clip\,=\,1.0), and $L_2$ weight decay.
Each run consisted of 50 training epochs.
Convergence was defined as peak validation accuracy $\geq 0.80$,
a criterion established prior to data collection.
Hyperparameters are reported in Appendix~\ref{app:hyperparams}.

\subsection{Experimental Design}
\label{sec:methods:design}

\paragraph{Experiment 1: Training reliability (50 matched seeds).}
For each of 50 random seeds (base 300{,}000, stride 13), one VEN-intact
($K = 40$) and one VEN-ablated ($K = 0$) network were trained from the
same random initialisation of shared pyramidal weights, yielding 50 paired
comparisons.
Convergence rates were compared with a one-tailed Fisher's exact test
(alternative: VEN-intact $>$ VEN-ablated).

\paragraph{Experiment 2: Learning trajectory analysis (20 seeds).}
The first 20 seeds from Experiment~1 were retrained with validation accuracy
recorded at every epoch.
Failed runs were classified as ``never learned''
(peak accuracy $< 0.65$) or ``learned then collapsed''
(peak $\geq 0.65$, final $< 0.80$).

\paragraph{Experiment 3: Phase ablation (12 seeds with complete data).}
Networks initialised with $K = 40$ VENs were trained with VEN weights
($W_{iv}$, $W_{vo}$) zeroed at one of five epochs $\{0, 5, 10, 25, 50\}$
and held at zero thereafter.
Epoch 50 (the terminal epoch) is functionally equivalent to no ablation.
Seeds used the range 400{,}000 (stride 11); up to 15 seeds were run and 12
completed all five ablation conditions; the remaining three seeds did not
complete all five conditions because the compute session ended before those
training runs finished, and they are excluded from within-seed comparisons.

\paragraph{Experiment 4: Spectral norm analysis (25 seeds).}
The spectral norm $\lVert W_{pp} \rVert_2$ was measured at initialisation
for both intact and ablated conditions across 25 seeds
(base 300{,}000, stride 13), using the largest singular value of the
masked weight matrix computed via sparse SVD.
Note: this experiment ran to 25 of the planned 50 seeds before the compute
session ended; the same \texttt{train\_model} function (Adam with
CosineAnnealingLR, $T_{\max}=50$) was used as in all other experiments.

\paragraph{Experiment 5: Gradient norm tracking (10 seeds).}
The Frobenius norm of $\partial L / \partial W_{pp}$ was recorded at every
epoch for 10 seeds (base 300{,}000, stride 13) in both conditions.
Early-epoch norms (epochs 0--9) were compared with a one-tailed Wilcoxon
signed-rank test (alternative: intact $>$ ablated), consistent with the
directional prediction that VEN-ablated networks would show suppressed
gradient magnitude at $W_{pp}$.

\paragraph{Experiment 6: Inference-time VEN ablation (20 seeds).}
VEN-intact networks were trained to convergence (seeds 500{,}000, stride 17),
then $W_{iv}$ and $W_{vo}$ were zeroed and the network was evaluated without
retraining.
Performance before and after ablation was compared with a one-tailed Wilcoxon
signed-rank test (alternative: intact $>$ zeroed).

\subsection{Statistical Analysis}
\label{sec:methods:stats}

Tests are one-tailed where a directional hypothesis was pre-specified and
two-tailed otherwise.
Exact $p$-values are reported throughout.
Effect sizes are reported as OR (Fisher's exact) or rank-biserial $r$
(Wilcoxon signed-rank; computed as $(T_+ - T_-) / [N(N+1)/2]$,
where $T_+$ and $T_-$ are the sums of positive and negative ranks).

\section{Results}
\label{sec:results}

\subsection{VENs Dramatically Increase Training Reliability}
\label{sec:results:fisher}

Across 50 matched random initialisations, VEN-intact networks converged in
49 of 50 cases (98\%), while VEN-ablated networks converged in only 35 of 50
(70\%), a 28-percentage-point difference corresponding to an odds ratio of
21.0 (95\% CI 2.7--167; Fisher's exact, one-tailed $p = 8.7 \times 10^{-5}$;
Table~\ref{tab:fisher}).

VENs constitute only 2\% of the total neuron count and receive only 8 input
connections each.
Their outsized influence on convergence odds implies the relevant mechanism
is architectural rather than additive: the provision of a topologically
distinct pathway, not merely increased parameter count
(see Section~\ref{sec:theory}).

Crucially, VEN-intact and VEN-ablated \emph{successful} runs achieved
statistically indistinguishable final accuracy
(intact successful: $0.990 \pm 0.005$; ablated successful: $0.986 \pm 0.007$;
Mann--Whitney $p > 0.05$), confirming that VENs are not necessary for
\emph{performing} the task once acquired.
The advantage is specific to the acquisition phase.

\begin{table}[htbp]
\centering
\caption{Training convergence contingency table (50 paired seeds).}
\label{tab:fisher}
\begin{tabular}{lcc}
\toprule
Condition & Converged & Failed \\
\midrule
VEN-intact  ($K=40$) & 49 & 1  \\
VEN-ablated ($K=0$)  & 35 & 15 \\
\midrule
\multicolumn{3}{l}{Fisher's exact (one-tailed): OR\,=\,21.0 (95\% CI 2.7--167),} \\
\multicolumn{3}{l}{$p = 8.7 \times 10^{-5}$} \\
\bottomrule
\end{tabular}
\end{table}

\FloatBarrier

\subsection{Learning Trajectory Analysis}
\label{sec:results:traj}

Learning trajectories across 20 seeds per condition are shown in
Figure~\ref{fig:trajectories}.
VEN-intact networks (20/20 converged) showed rapid, consistent improvement
within the first ten epochs.
Both failed VEN-ablated networks (seeds 300078, 300169) showed the
``never learned'' failure mode: validation accuracy remained at chance
($\approx 0.50$) throughout all 50 epochs, with no indication of any
learning signal at any point.

This rules out a simple speed-of-learning account: if VENs only accelerated
an otherwise intact learning process, failed runs should show slow but
upward trajectories.
The complete absence of learning indicates that the networks received
insufficient gradient signal to escape their random initialisation.

\begin{figure}[htbp]
\centering
\includegraphics[width=\linewidth]{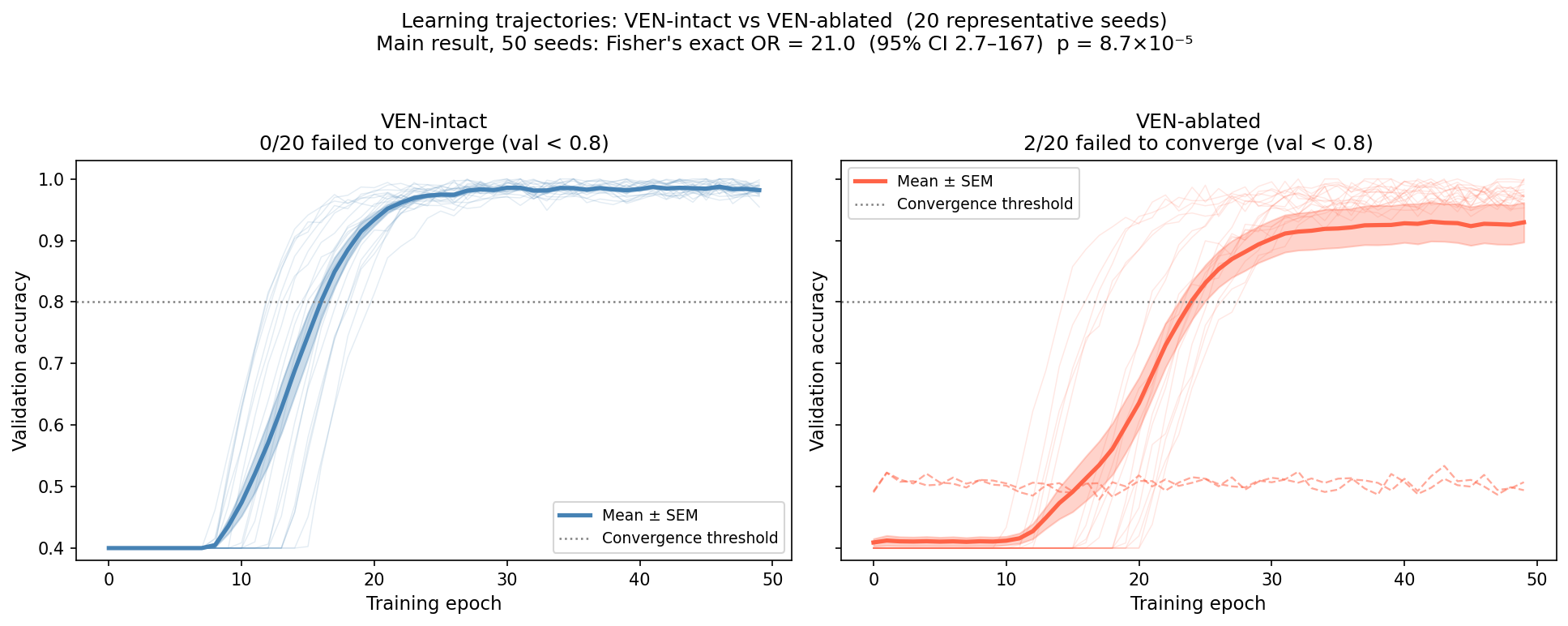}
\caption{Learning trajectories for VEN-intact (blue, left) and VEN-ablated
(red, right) networks across 20 matched seeds.
Faint solid lines: successful runs. Dashed lines: failed runs.
Shaded band: mean $\pm$ SEM.
Horizontal dotted line: convergence threshold (0.80).
Failed ablated networks remain at chance throughout all epochs, inconsistent with
a speed-of-learning account.}
\label{fig:trajectories}
\end{figure}

\FloatBarrier

\subsection{Timing of VEN Dependency: Phase Ablation}
\label{sec:results:phase}

Figure~\ref{fig:phase} and Table~\ref{tab:phase} summarise phase ablation
results across 12--13 seeds and five ablation conditions.

Two findings stand out.
First, ablation at epoch~0 (train entirely without VENs from initialisation)
produced \emph{zero} failures in 13 seeds.
These seeds are a distinct subset drawn independently of the 50 seeds in
Experiment~1; the 0/13 failure rate is consistent with the 30\% base rate
($p \approx 0.01$ by exact binomial) but should be interpreted as an
architectural observation rather than a rate estimate given the small $n$.
The result indicates that the network \emph{can} learn without VENs when
no VEN-driven co-adaptation has occurred.
Second, failures were concentrated in the mid-training window (epochs 5--25):
2 of 13 seeds (15\%) failed when VENs were removed at epoch~5; 1 of 12
(8\%) at epoch~10; 1 of 12 (8\%) at epoch~25; 0 of 12 at epoch~50.

Two seeds illustrate the mechanism.
Seed~400044 converged at epoch~0 ablation (accuracy 0.995) yet failed when
VENs were removed at epoch~5 (0.530), and recovered when ablation was delayed
to epoch~10 or later (0.892, 0.992, 0.993).
Seed~400055 survived epochs~0 and 5 but failed at epochs~10 (0.527) and
25 (0.556), recovering only at epoch~50 (0.864).
These patterns cannot be explained by static initialisation properties;
they require that a \emph{dependency} on VEN activity forms at different
points during early training.
We term this the \emph{VEN dependency trap}: the pyramidal circuit co-adapts
to VEN-driven signals during epochs 5--25, after which VEN removal disrupts
the co-adapted weight state.

\begin{figure}[htbp]
\centering
\includegraphics[width=0.90\linewidth]{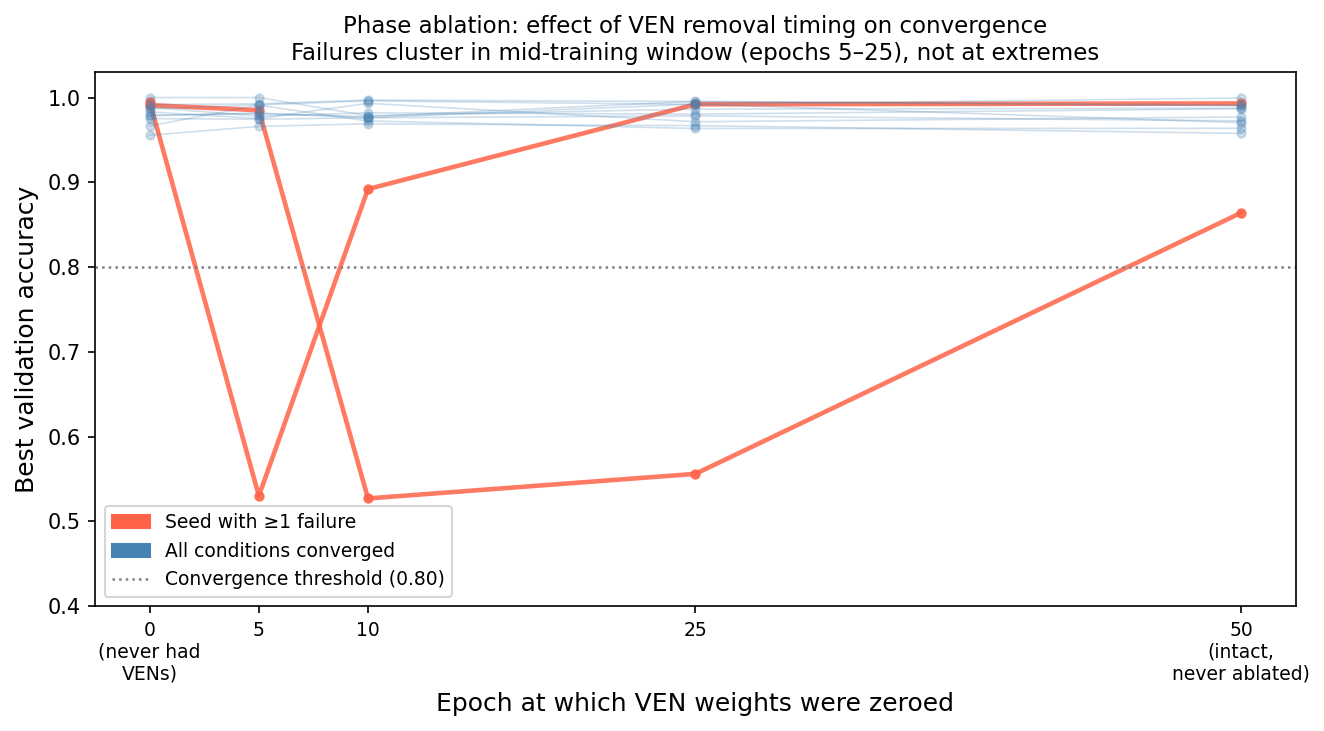}
\caption{Phase ablation results.
Each line shows one seed's best validation accuracy across five ablation
epochs.
Red lines: seeds with at least one failure; blue lines: seeds converging
in all conditions.
Failures cluster in the mid-training window (epochs 5--25); zero failures
occur at the extremes (epoch~0 or epoch~50).}
\label{fig:phase}
\end{figure}

\begin{table}[H]
\centering
\caption{Phase ablation failure counts.
$n$\,=\,seeds with complete data at that ablation epoch.}
\label{tab:phase}
\begin{tabular}{lccc}
\toprule
Ablation epoch & $n$ & Failures & Rate \\
\midrule
0  (start, never had VENs) & 13 & 0 & 0\%  \\
5                          & 13 & 2 & 15\% \\
10                         & 12 & 1 & 8\%  \\
25                         & 12 & 1 & 8\%  \\
50 (no ablation, intact)   & 12 & 0 & 0\%  \\
\bottomrule
\end{tabular}
\end{table}

\FloatBarrier

\subsection{Spectral Norm at Initialisation Is Uniform Across Seeds}
\label{sec:spectral}

Spectral norm measurements across 25 seeds revealed that
$\lVert W_{pp}^{(0)} \rVert_2$ was strikingly uniform: range 0.077--0.079,
mean 0.078 (Figure~\ref{fig:spectral}).
In the VEN-ablated condition, 23 of 25 seeds converged (92\%) and 2 failed;
failed and converged seeds showed identical spectral norm values.
The predicted association between lower initial spectral norm and training
failure was not observed.
The two failures in this subsample correspond to seeds 300078 and 300169,
the same seeds identified as failures in Experiment~2, confirming that
convergence failure is seed-specific and reproducible; the remaining first
25 seeds converged in both experiments.
The null result of interest (no correlation between $\lVert W_{pp}^{(0)} \rVert_2$
and training outcome) holds across both samples.

\begin{figure}[htbp]
\centering
\includegraphics[width=\linewidth]{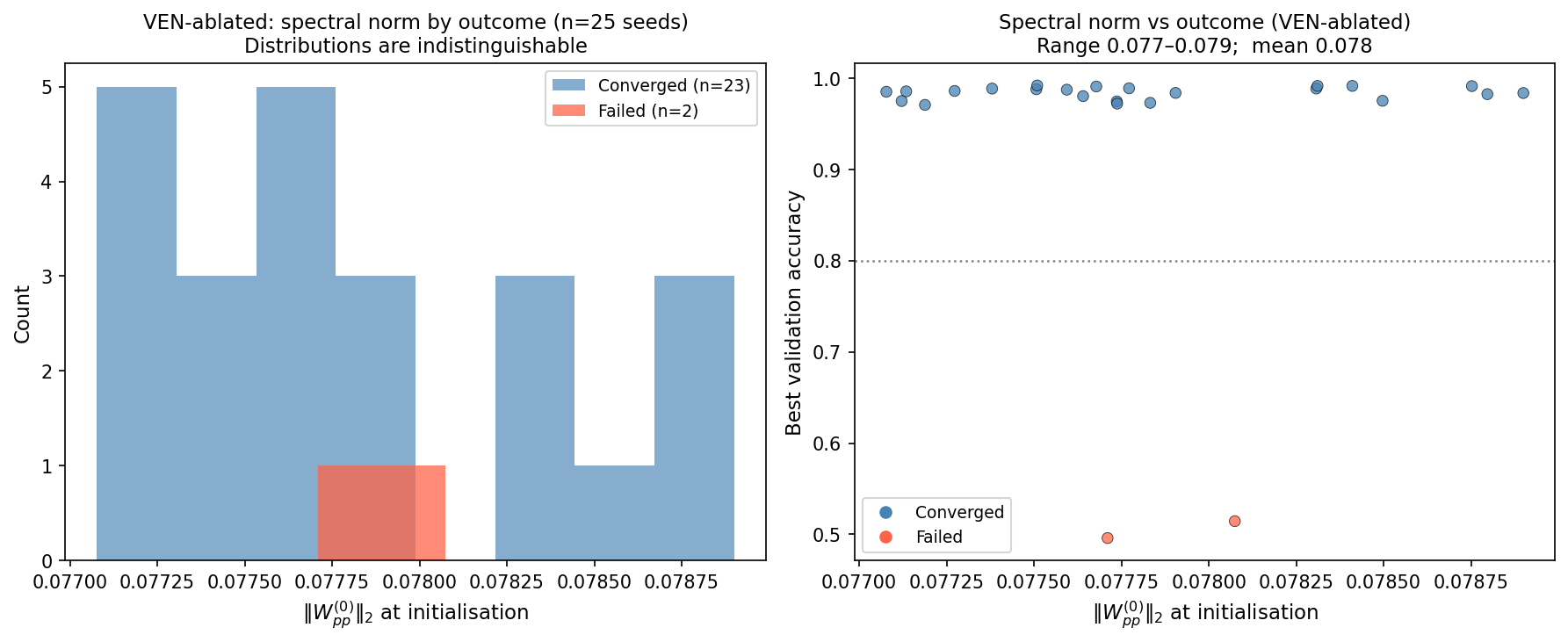}
\caption{Left: distribution of $\lVert W_{pp}^{(0)} \rVert_2$ by training
outcome in the VEN-ablated condition.
Failed seeds (red) and converged seeds (blue) are indistinguishable.
Right: scatter of spectral norm vs best validation accuracy (VEN-ablated).
The uniform spectral norm precludes per-seed prediction of failure.}
\label{fig:spectral}
\end{figure}

This null result is nonetheless theoretically informative.
With membrane leak factor $\beta_{\rm pyr} = 0.95$ and ATan surrogate bound
$\gamma = 1$, the theoretical upper bound on each recurrent Jacobian factor
satisfies:
\begin{equation}
    \alpha = \beta_{\rm pyr} + \gamma\,\lVert W_{pp}^{(0)} \rVert_2
           = 0.95 + 1.0 \times 0.078 = 1.028
    \label{eq:alpha_measured}
\end{equation}
for \emph{every} seed in the experiment.
All networks initialise near the critical gradient-flow boundary ($\alpha \approx 1$),
at the transition between guaranteed attenuation ($\alpha < 1$) and potential
explosion ($\alpha > 1$).
Gradient clipping handles the explosion case but suppresses the gradient
magnitude rather than restoring it; whether the effective learning signal at
$W_{pp}$ is adequate for convergence depends on the instantaneous Jacobian
eigenstructure, not on $\alpha$ alone.
The VEN pathway advantage is therefore not a remedy for unusually poor
initialisations: it is a structural benefit present for all initialisations
because VEN gradients bypass the recurrent Jacobian entirely
(Proposition~\ref{prop:ven_gradient}).

\FloatBarrier

\subsection{Gradient Norms at $W_{pp}$: Unexpected Direction}
\label{sec:gradnorm}

Contrary to the prediction that VEN-ablated networks would show suppressed
gradient magnitude at $W_{pp}$, ablated networks showed \emph{higher}
early-epoch gradient norms than intact networks across all 10 seeds
(ablated mean $= 371{,}871$; intact mean $= 244{,}887$;
ratio $= 1.52\times$;
Wilcoxon signed-rank $W = 12.0$, $p = 0.947$, rank-biserial $r = -0.564$;
Figure~\ref{fig:gradnorms}).

\begin{figure}[htbp]
\centering
\includegraphics[width=\linewidth]{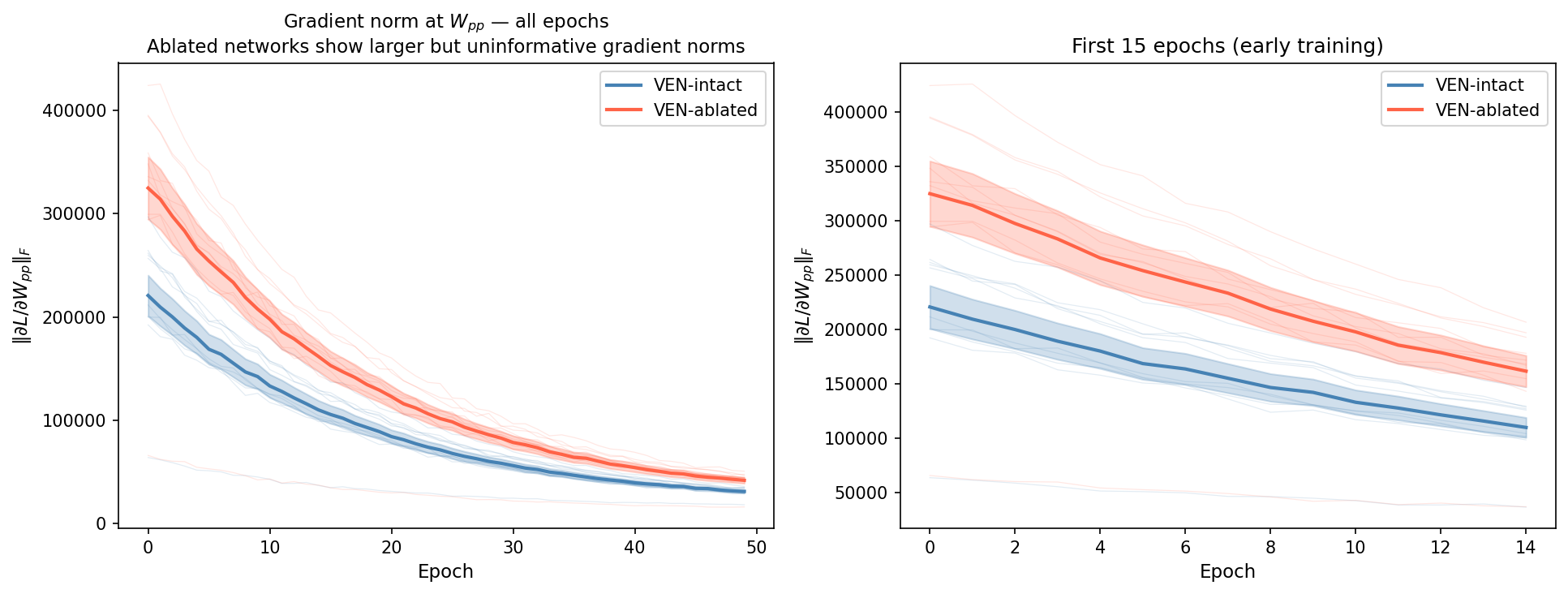}
\caption{Left: mean $\pm$ SEM gradient norm at $W_{pp}$ across all 50
training epochs.
Right: zoom on the first 15 epochs.
VEN-ablated networks (red) show larger, more variable gradient norms than
VEN-intact networks (blue), contrary to a simple gradient-suppression account.}
\label{fig:gradnorms}
\end{figure}

The elevated gradient norm in ablated networks is interpretable in terms of
loss dynamics: without VENs, the loss remains large throughout early training
(the network has no direct output pathway), driving large gradients everywhere,
including at $W_{pp}$.
In intact networks, VENs rapidly reduce the loss via the direct output pathway,
yielding smaller but more informative gradient signals to the pyramidal circuit.
The failed ablated seed (300078) showed early gradient norms nearly identical
to its intact counterpart (ablated: 65{,}427; intact: 63{,}292), confirming
that gradient \emph{magnitude} at $W_{pp}$ does not distinguish seeds that fail
from seeds that converge.

These results indicate that the gradient-attenuation mechanism operates through
factors other than $\lVert \partial L / \partial W_{pp} \rVert_F$ at the
recurrent weight matrix specifically.
The true mechanistic pathway, whether through gradient direction consistency,
loss landscape geometry, or the quality of early error signals reaching
feedforward input weights, is an empirical question for future work.

\FloatBarrier

\subsection{Inference-Time VEN Ablation: Heterogeneous Performance Effects}
\label{sec:ftd}

Across 20 VEN-intact networks trained to convergence, zeroing VEN weights at
test time produced a statistically significant but heterogeneous performance
drop (Wilcoxon signed-rank $W = 15.0$, $p = 0.022$;
Figure~\ref{fig:ftd}).

\begin{figure}[htbp]
\centering
\includegraphics[width=\linewidth]{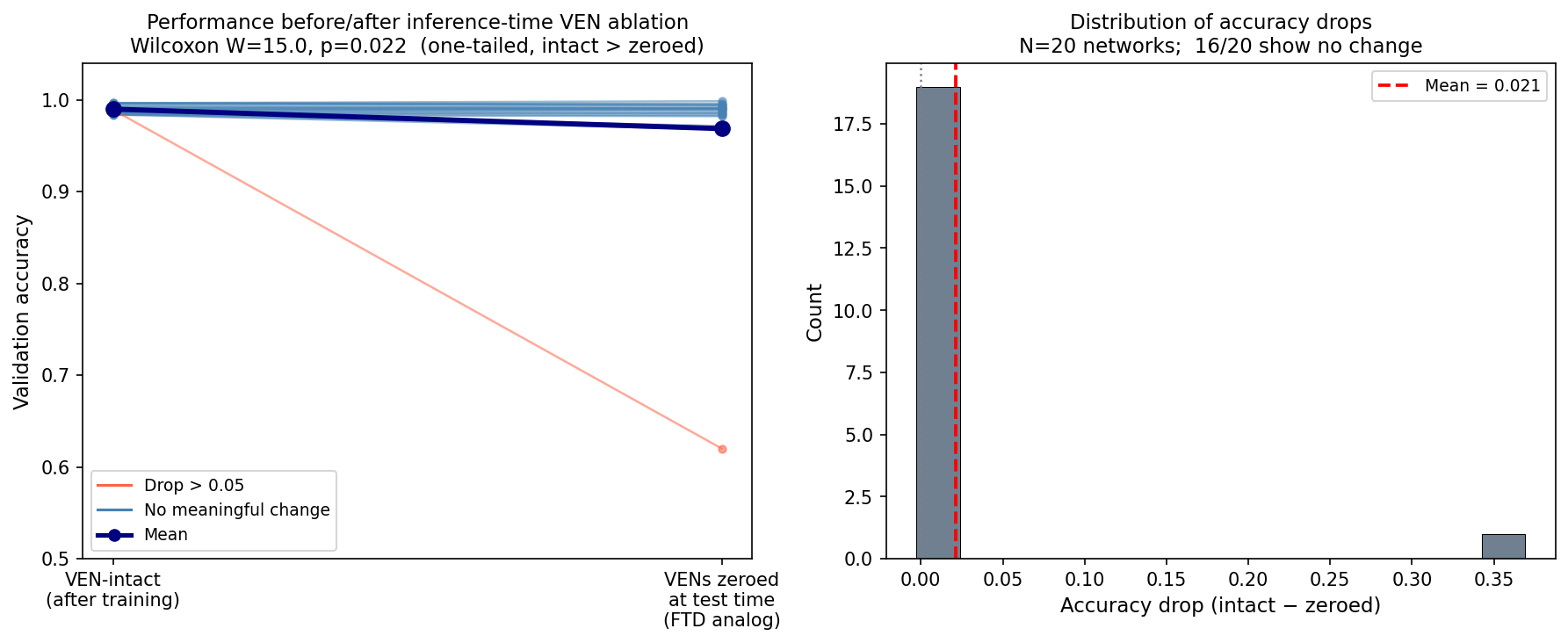}
\caption{Left: paired validation accuracy before (VEN-intact) and after
(VENs zeroed at test time) for 20 networks.
Right: distribution of accuracy drops.
Mean drop $= 0.021 \pm 0.080$.
One network showed catastrophic collapse ($0.989 \to 0.620$, seed 500153);
the majority (16/20) showed no change.}
\label{fig:ftd}
\end{figure}

The effect was highly heterogeneous (Table~\ref{tab:ftd}):
16 of 20 networks (80\%) showed no performance change (drop $\leq 0.002$),
indicating that the learned representation transferred completely to the
pyramidal circuit during training.
Three networks showed minor drops (range 0.007--0.021).
One network exhibited catastrophic collapse: accuracy fell from 0.989 to
0.620, well below the convergence threshold.

\begin{table}[H]
\centering
\caption{Summary of inference-time VEN ablation across 20 networks.}
\label{tab:ftd}
\begin{tabular}{lc}
\toprule
Measure & Value \\
\midrule
Mean accuracy, VEN-intact      & $0.990 \pm 0.012$ \\
Mean accuracy, VENs zeroed     & $0.969 \pm 0.081$ \\
Mean accuracy drop             & $0.021 \pm 0.080$ \\
Networks with drop $\leq 0.002$ & 16/20 (80\%) \\
Networks with drop $> 0.05$    &  1/20 (5\%)  \\
Maximum drop (seed 500153)     & 0.369 \\
Wilcoxon signed-rank           & $W = 15.0$, $p = 0.022$ \\
\bottomrule
\end{tabular}
\end{table}

The result is driven by the consistency of direction (all non-zero drops
favour the intact condition) and one outlier.
We interpret the heterogeneity as reflecting two qualitatively different
training trajectories: networks that develop a VEN-independent pyramidal
representation (80\%) versus networks that co-adapt strongly to VEN signals
and remain VEN-dependent at inference (20\%).
The catastrophic collapse of seed 500153 suggests that, for a minority of
weight initialisations, VEN activity during training becomes load-bearing for
the output representation in a way that the pyramidal circuit cannot compensate.

A model-architectural caveat applies: in the VENCircuit, the same input
$\mathbf{x}$ is available to pyramidal neurons via $W_{ip}$ independently of
VEN status, making the VEN and pyramidal input streams informationally
redundant.
In biological cortex, VENs in ACC layer~V project to subcortical structures
including the striatum, amygdala, and hypothalamus~\citep{allman2011voneconomo}
that do not receive equivalent direct pyramidal input.
The fraction of networks showing VEN-dependence at inference (20\%) may
therefore underestimate the biological consequence of adult VEN loss.

\FloatBarrier

\section{Theoretical Analysis: Von Economo Neurons as Residual Gradient Pathways}
\label{sec:theory}

\subsection{Network Specification and Surrogate-Gradient BPTT}
\label{sec:theory:model}

Let $\mathbf{u}_t \in \mathbb{R}^N$ and $\mathbf{s}_t \in \{0,1\}^N$ denote
pyramidal membrane potentials and spike vectors at timestep $t$.
Ignoring the detached reset for gradient analysis, the gradient-relevant
pyramidal recurrence is:
\begin{align}
    \mathbf{u}_t &= \beta_{\rm pyr}\,\mathbf{u}_{t-1}
                   + W_{pp}\,\mathbf{s}_{t-1}
                   + W_{ip}\,\mathbf{x},
                   \label{eq:pyr_dyn}\\
    \mathbf{s}_t &= \sigma_{\rm spike}(\mathbf{u}_t - \theta),
                   \label{eq:pyr_spike}
\end{align}
where $\beta_{\rm pyr} = 0.95$ and $\sigma_{\rm spike}$ uses the ATan
surrogate~\eqref{eq:atan_surr}.
VEN dynamics are feedforward-only (no recurrent pyramidal input):
\begin{equation}
    \mathbf{z}_t = \beta_{\rm VEN}\,\mathbf{z}_{t-1} + W_{iv}\,\mathbf{x},
    \qquad
    \mathbf{v}_t = \sigma_{\rm spike}(\mathbf{z}_t - \theta),
    \label{eq:ven_dyn}
\end{equation}
where $\beta_{\rm VEN} = 0.80$.
Training uses surrogate-gradient BPTT~\citep{werbos1990bptt,
neftci2019surrogate, zenke2018superspike}, replacing $\sigma_{\rm spike}'$
with the ATan derivative in~\eqref{eq:atan_surr}.
The ATan surrogate is bounded: $|\sigma'(u)| \leq \gamma = 1$ for all $u$
(maximum at $u = 0$, decaying as $u^2$).
Define $D_t = \mathrm{diag}\!\left(\sigma'(\mathbf{u}_t - \theta)\right)
\in \mathbb{R}^{N \times N}$, so $\lVert D_t \rVert_2 \leq \gamma = 1$.

\subsection{Gradient Attenuation in the Pyramidal Pathway}
\label{sec:theory:attenuation}

\begin{proposition}[Recurrent Jacobian bound]
\label{prop:attenuation}
Define $\alpha = \beta_{\rm pyr} + \gamma\lVert W_{pp} \rVert_2$.
For all $k > t$, the Jacobian of $\mathbf{u}_k$ with respect to
$\mathbf{u}_t$ satisfies:
\begin{equation}
    \left\lVert\frac{\partial \mathbf{u}_k}{\partial \mathbf{u}_t}
    \right\rVert_2 \leq \alpha^{k-t}.
    \label{eq:jac_bound}
\end{equation}
\end{proposition}

\begin{proof}
The Jacobian of $\mathbf{u}_k$ w.r.t.\ $\mathbf{u}_t$ is
$\prod_{j=t+1}^{k}J_j$ where $J_j = \beta_{\rm pyr} I + D_{j-1}
W_{pp}^{\top}$.
By submultiplicativity,
$\lVert J_j \rVert_2 \leq \beta_{\rm pyr} + \lVert D_{j-1} \rVert_2
\lVert W_{pp} \rVert_2 \leq \beta_{\rm pyr} + \gamma\lVert W_{pp} \rVert_2
= \alpha$.
Inequality~\eqref{eq:jac_bound} follows by taking the product over
$j = t+1, \ldots, k$.
\end{proof}

\begin{remark}
At random initialisation, spectral norm measurements
(Section~\ref{sec:spectral}) give $\lVert W_{pp}^{(0)} \rVert_2 \approx 0.078$
uniformly across all 25 measured seeds.
With $\beta_{\rm pyr} = 0.95$ and the ATan bound $\gamma = 1$, the
theoretical bound on each Jacobian factor is
$\alpha = 0.95 + 0.078 = 1.028$, placing every network at the near-critical
point of gradient-flow stability.
When $\alpha < 1$, the product in~\eqref{eq:jac_bound} decays exponentially
(gradient vanishing); when $\alpha > 1$, it can grow (gradient explosion).
At $\alpha \approx 1$, neither is guaranteed by this bound, and the actual
behaviour depends on the instantaneous neural activity pattern (specifically,
on $\lVert D_t \rVert_2$, which is far below $\gamma = 1$ for neurons whose
membrane potential is distant from threshold).
Gradient clipping addresses explosion but suppresses gradient magnitude
rather than restoring it; it does not remedy effective vanishing.
The instability of recurrent networks near the critical spectral radius is a
classical result in dynamical systems theory~\citep{sompolinsky1988} and was
formally characterised as the vanishing gradient problem by
\citet{bengio1994learning}, motivating the LSTM architecture~\citep{hochreiter1997lstm}.
\end{remark}

\subsection{The VEN Pathway Provides an $O(1)$ Gradient}
\label{sec:theory:ven_gradient}

\begin{proposition}[VEN direct gradient]
\label{prop:ven_gradient}
Under the approximate readout~\eqref{eq:output} and treating
$\partial\mathbf{z}_t/\partial W_{iv} = \mathbf{x}^{\top}$ at each step
(one-step gradient; see Remark~\ref{rem:ven_bptt}), the gradient of $L$
with respect to $W_{iv}$ satisfies:
\begin{equation}
    \frac{\partial L}{\partial W_{iv}}
    = \frac{1}{T}\sum_{t=1}^{T}
      \Bigl(W_{vo}^{\top}\,\tfrac{\partial L}{\partial \hat{\mathbf{y}}}\Bigr)\,
      \sigma'\!\left(\mathbf{z}_t - \theta\right)\,\mathbf{x}^{\top},
    \label{eq:ven_grad}
\end{equation}
and analogously for $W_{vo}$.
This gradient contains no product of recurrent Jacobians of the
form~\eqref{eq:jac_bound}: it is $O(1)$ with respect to $T$
and \emph{independent of} $\lVert W_{pp} \rVert_2$.
\end{proposition}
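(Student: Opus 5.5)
The plan is a direct chain-rule computation on the computational graph from $W_{iv}$ to $L$. Under the approximate readout~\eqref{eq:output}, $L$ depends on $W_{iv}$ only through $\bar{\mathbf{v}}$. The pyramidal quantities $\mathbf{u}_t,\mathbf{s}_t$ are built from $W_{ip}$ and $W_{pp}$ alone; the VEN dynamics~\eqref{eq:ven_dyn} receive no pyramidal input; and the pyramidal recurrence~\eqref{eq:pyr_dyn} receives no VEN input. The first step is therefore to record the structural fact that $\partial\bar{\mathbf{s}}/\partial W_{iv}=0$. This leaves only the path $W_{iv}\to\mathbf{z}_t\to\mathbf{v}_t\to\bar{\mathbf{v}}\to\hat{\mathbf{y}}\to L$.

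Next I would expand along that path. Since $\partial\hat{\mathbf{y}}/\partial\bar{\mathbf{v}}=W_{vo}$ and $\partial\bar{\mathbf{v}}/\partial\mathbf{v}_t=\tfrac1T I$, the upstream error arriving at $\mathbf{v}_t$ is $\tfrac1T W_{vo}^{\top}\partial L/\partial\hat{\mathbf{y}}$. The surrogate step contributes $\partial\mathbf{v}_t/\partial\mathbf{z}_t=\mathrm{diag}(\sigma'(\mathbf{z}_t-\theta))$. Under the one-step convention $\partial\mathbf{z}_t/\partial W_{iv}=\mathbf{x}^{\top}$, the contribution of step $t$ is the outer product
\[
\bigl(\mathrm{diag}(\sigma'(\mathbf{z}_t-\theta))\,W_{vo}^{\top}\partial L/\partial\hat{\mathbf{y}}\bigr)\mathbf{x}^{\top}.
\]
Summing over $t$ and writing the diagonal product elementwise gives~\eqref{eq:ven_grad}. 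For $W_{vo}$ the computation is even shorter: $\partial L/\partial W_{vo}=(\partial L/\partial\hat{\mathbf{y}})\,\bar{\mathbf{v}}^{\top}$, which involves no Jacobians at all.

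For the qualitative claims, I would bound the norm of~\eqref{eq:ven_grad} by
\[
\lVert W_{vo}\rVert_2\,\lVert\partial L/\partial\hat{\mathbf{y}}\rVert_2\,\gamma\,\lVert\mathbf{x}\rVert_2,
\]
using $|\sigma'|\le\gamma=1$ and the fact that an average of $T$ terms each bounded by this constant is bounded by the same constant. This shows the expression is $O(1)$ in $T$. No factor $J_j$ from Proposition~\ref{prop:attenuation} appears, and $W_{pp}$ enters nowhere in the expression, which gives independence from $\lVert W_{pp}\rVert_2$. One caveat: through $\partial L/\partial\hat{\mathbf{y}}$ the gradient depends on the current value of $\bar{\mathbf{s}}$, hence on $W_{pp}$. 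The claim should therefore be read as saying that the bound is free of recurrent Jacobian products and of $\lVert W_{pp}\rVert_2$, and I would state it that way.

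The main obstacle is justifying the one-step convention rather than deriving the formula. The true VEN recurrence gives $\partial\mathbf{z}_t/\partial W_{iv}=\sum_{j\le t}\beta_{\rm VEN}^{t-j}\mathbf{x}^{\top}$. I would handle this by noting that this sum is bounded by $(1-\beta_{\rm VEN})^{-1}\lVert\mathbf{x}\rVert=5\lVert\mathbf{x}\rVert$, uniformly in $T$. Retaining the full temporal sum thus changes only a constant, and the VEN Jacobian factor $\beta_{\rm VEN}=0.8<1$ never multiplies $W_{pp}$. This preserves both conclusions, and it is the content I would defer to Remark~\ref{rem:ven_bptt}.
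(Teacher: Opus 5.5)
Your proposal is correct and follows essentially the paper's route. You do a direct chain-rule computation along the feedforward VEN path and apply the pointwise bound $|\sigma'|\le\gamma=1$, so the $1/T$ prefactor cancels the $T$ summands; the full-BPTT multiplier, bounded by $(1-\beta_{\rm VEN})^{-1}=5$, is deferred just as in Remark~\ref{rem:ven_bptt}. Your caveat that $W_{pp}$ still enters through $\partial L/\partial\hat{\mathbf{y}}$ is a fair sharpening the paper glosses over, and it closes completely once you note that under cross-entropy $\lVert\partial L/\partial\hat{\mathbf{y}}\rVert_2\le\sqrt{2}$, which makes the bound uniform in $W_{pp}$.
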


\begin{proof}
From~\eqref{eq:ven_dyn}, $\mathbf{z}_t$ depends on $W_{iv}$ but not on
$W_{pp}$ or $\mathbf{s}_t$.
Under the one-step gradient approximation
($\partial\mathbf{z}_t/\partial W_{iv} = \mathbf{x}^{\top}$),
differentiating the approximate readout~\eqref{eq:output} through the VEN
pathway gives~\eqref{eq:ven_grad} directly, with no factor of the form
$(\beta I + D_j W_{pp}^{\top})$.
Since $\lvert\sigma'(\cdot)\rvert \leq \gamma = 1$ pointwise, each of the
$T$ terms is bounded in magnitude by
$\lVert W_{vo}^{\top}\,\partial L/\partial\hat{\mathbf{y}}\rVert_2
\lVert\mathbf{x}\rVert_2$.
The prefactor $1/T$ cancels the sum of $T$ such terms, giving the
$O(1)$ bound
$\gamma\,\lVert W_{vo}^{\top}\,\partial L/\partial\hat{\mathbf{y}}\rVert_2
\lVert\mathbf{x}\rVert_2$.
\end{proof}

\begin{remark}
\label{rem:ven_bptt}
The full surrogate-gradient BPTT through the VEN's own $\beta_{\rm VEN}$
recurrence multiplies the factor $\mathbf{x}^{\top}$ in each term of
\eqref{eq:ven_grad} by a scalar $c_t \geq 1$.
The scalar satisfies the recursion
$c_t = (1-s_{t-1}^{\rm det})\,\beta_{\rm VEN}\,c_{t-1} + 1$ (with $c_1=1$, since $\mathbf{z}_0=0$),
so $c_t \leq (1-\beta_{\rm VEN})^{-1} = 5$ for all $t$ and all spike
patterns.
This multiplies the $O(1)$ bound by at most~5, leaving the bound
independent of~$T$.
The absence of $W_{pp}$ from the gradient holds exactly, without
approximation.
\end{remark}

\subsection{Corollary: Training Stability Asymmetry}
\label{sec:theory:corollary}

\begin{corollary}[Training stability asymmetry]
\label{cor:stability}
\leavevmode
\begin{enumerate}
    \item \textbf{VEN-ablated network.}\\
    The gradient reaching all parameters flows exclusively through the
    near-critical recurrent pathway
    ($\alpha \approx 1.028$, Section~\ref{sec:spectral}).
    Whether any given initialisation converges depends on the instantaneous
    Jacobian eigenstructure, not on the spectral norm at initialisation alone;
    approximately 30\% of seeds fail.

    \item \textbf{VEN-intact network.}\\
    Regardless of $\alpha$, the VEN pathway provides an $O(1)$ gradient
    to $W_{iv}$ and $W_{vo}$ (Proposition~\ref{prop:ven_gradient}).
    A partial solution can be learnt through the VEN pathway even when the
    recurrent Jacobian is near-critical, providing a learning scaffold that
    prevents catastrophic gradient failure.
\end{enumerate}
Since $\alpha \approx 1.028$ universally at initialisation, VEN-intact
networks are protected from gradient instability by a structural architectural
property, not by a favourable initialisation.
VEN-ablated networks that converge do so despite, not because of, a
beneficial spectral norm.
\end{corollary}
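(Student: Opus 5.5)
The plan is to treat Corollary~\ref{cor:stability} as a structural statement about which parameters can receive gradient and through which computational paths. It then follows from Propositions~\ref{prop:attenuation} and~\ref{prop:ven_gradient}, combined with the measured value of $\lVert W_{pp}^{(0)}\rVert_2$ from Section~\ref{sec:spectral}. The empirical clauses (the $\approx 30\%$ failure rate and the uniformity of $\alpha$) are imported from Experiments~1 and~4 rather than derived. The proof therefore has three layers: a path decomposition of $\partial L/\partial\theta$, an application of the two propositions to each path class, and a final step joining these to the measurements.

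\textbf{Step 1 (path decomposition).} Under the readout~\eqref{eq:output}, write $\partial L/\partial \hat{\mathbf{y}}$ as the common upstream factor and split the downstream gradient into a pyramidal term (through $W_{po}\bar{\mathbf{s}}$) and a VEN term (through $W_{vo}\bar{\mathbf{v}}$).

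For the pyramidal term, the chain rule gives, for any parameter entering $\mathbf{u}_t$ (namely $W_{pp}$ or $W_{ip}$), a sum over $k\ge t$ of factors $W_{po}^{\top}\partial L/\partial\hat{\mathbf{y}}$, $D_k$ and $\partial\mathbf{u}_k/\partial\mathbf{u}_t$. By Proposition~\ref{prop:attenuation}, each such factor is bounded by $\alpha^{k-t}$. Only the diagonal $k=t$ term avoids the recurrent Jacobian product.

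When $K=0$, the VEN term vanishes identically, since $W_{iv}$ and $W_{vo}$ are absent. Every parameter except $W_{po}$ then receives gradient solely through this sum. That establishes the first clause of part~1; I will state it with the caveat that $W_{po}$ itself sees only $\bar{\mathbf{s}}$, which is itself generated by the recurrence.

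\textbf{Step 2 (VEN-intact case).} Invoke Proposition~\ref{prop:ven_gradient} and Remark~\ref{rem:ven_bptt}. Since $\mathbf{z}_t$ does not depend on $W_{pp}$, the gradient with respect to $W_{iv}$ and $W_{vo}$ is bounded by $5\gamma\lVert W_{vo}^{\top}\partial L/\partial\hat{\mathbf{y}}\rVert_2\lVert\mathbf{x}\rVert_2$. This bound is uniform in $T$ and in $\alpha$, which gives the ``regardless of $\alpha$'' clause.

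To turn ``provides an $O(1)$ gradient'' into ``a partial solution can be learnt,'' I would observe that the VEN subproblem is a convex-like linear readout in $W_{vo}$ on features $\bar{\mathbf{v}}$. A single gradient step on $W_{vo}$ therefore decreases $L$ whenever $\bar{\mathbf{v}}$ is class-informative, independently of the pyramidal Jacobian. This is the scaffold claim.

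\textbf{Step 3 (uniformity).} Substitute the measured $\lVert W_{pp}^{(0)}\rVert_2\in[0.077,0.079]$ into the definition of $\alpha$ to get $\alpha\in[1.027,1.029]$ for all seeds. It follows that the bounds in Step~2, being $\alpha$-independent, hold for every initialisation. The final sentences, ``structural, not favourable initialisation'' and ``converge despite'', then follow: $\alpha$ does not vary across seeds, so it cannot explain which ablated seeds converge, consistent with the null result in Section~\ref{sec:spectral}.

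\textbf{Main obstacle.} Steps~2 and~3 establish only upper bounds, while the corollary's conclusion about failure needs a lower bound or non-degeneracy statement. Proposition~\ref{prop:attenuation} with $\alpha>1$ does not prove vanishing, and an $O(1)$ upper bound does not prove that the VEN gradient is nonzero.

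I would first try to supply a lower bound for the VEN path: if $\sigma'(\mathbf{z}_t-\theta)$ is bounded below on a positive fraction of timesteps (true for the ATan surrogate when $\mathbf{z}_t$ stays in a bounded range, giving $\sigma'\ge 1/(1+(\pi R/2)^2)$), then the gradient norm is $\Theta(1)$.

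For the ablated case, I would present the claim honestly as conditional. Failure occurs when the effective products $\prod (\beta I + D_j W_{pp}^{\top})$ contract, for which $\lVert D_j\rVert_2\ll 1$ suffices. The $30\%$ figure is cited empirically, not proved.
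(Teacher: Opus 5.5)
The paper offers no argument for this corollary beyond juxtaposing Propositions~\ref{prop:attenuation}--\ref{prop:ven_gradient} with the measured $\alpha$ and the convergence counts. Your three-layer structure therefore matches its implicit route, and the problem lies in the derivations you add.

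Step~1 shows only that, when $K=0$, all gradient passes through the pyramidal population. It does not show that gradient passes through the near-critical Jacobian product, which is what the clause and its contrast with part~2 require, and your own decomposition shows the opposite. The $k=t$ terms bypass the product, and $\partial L/\partial W_{po}=(\partial L/\partial\hat{\mathbf y})\,\bar{\mathbf s}^{\top}$ contains no Jacobian at all. That $\bar{\mathbf s}$ is generated by the recurrence in the forward pass is irrelevant to the backward path; $\bar{\mathbf v}$ is generated by the VEN leak recursion too. Running the proof of Proposition~\ref{prop:ven_gradient} on the $k=t$ terms gives the ablated network its own Jacobian-free gradient to $W_{ip}$,
\[
\frac{1}{T}\sum_{t=1}^{T} D_t\,W_{po}^{\top}\frac{\partial L}{\partial\hat{\mathbf y}}\,\mathbf x^{\top},
\]
whose norm is at most $\gamma\lVert W_{po}^{\top}\partial L/\partial\hat{\mathbf y}\rVert_2\lVert\mathbf x\rVert_2$, independently of $T$ and $\alpha$. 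This also defeats your conditional failure mechanism. Contraction of $\prod_j(\beta I+D_jW_{pp}^{\top})$ suppresses only the $k>t$ terms and leaves this direct term and all of $\partial L/\partial W_{po}$ intact, so it cannot starve the ablated network of gradient.

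For the same reason, Step~2 proves too much. Cross-entropy of the linear readout~\eqref{eq:output} is jointly convex in $(W_{po},W_{vo})$ for fixed features. Your argument that a step on $W_{vo}$ lowers $L$ whenever $\bar{\mathbf v}$ is class-informative therefore applies verbatim to $W_{po}$ acting on $\bar{\mathbf s}$ in the ablated network. Nothing in your argument makes $\bar{\mathbf s}$ less informative than $\bar{\mathbf v}$; if anything, $2000$ cells with fan-in $80$ versus $40$ VENs with fan-in $8$ suggests the reverse. Likewise, your lower bound on $\sigma'$ applies equally to $D_t$ above. In either pathway it yields $\Theta(1)$ only if the upstream factor $\lVert W_{vo}^{\top}\partial L/\partial\hat{\mathbf y}\rVert_2$ (or its $W_{po}$ analogue) is bounded below, which you do not show. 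Nothing in Steps~1--2 separates the two conditions, so the $30\%$ versus $2\%$ failure asymmetry cannot follow from the mere existence of an $O(1)$ Jacobian-free path.

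What Propositions~\ref{prop:attenuation}--\ref{prop:ven_gradient} and Remark~\ref{rem:ven_bptt} actually support is narrower. Per sample, row $i$ of $\partial L/\partial W_{iv}$ is $g_i\bigl(\tfrac1T\sum_t\sigma'(z_{t,i}-\theta)\,c_{t,i}\bigr)\mathbf x^{\top}$, with $g=W_{vo}^{\top}\partial L/\partial\hat{\mathbf y}$ and a positive scalar at most $5$, so its direction is fixed by $g_i\mathbf x^{\top}$. The pyramidal gradient instead adds $k>t$ terms containing mixed-sign products of $W_{pp}$, controlled only through $\alpha^{k-t}$, that can oppose the direct term. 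That is a statement about gradient direction and stability, not availability. It still does not imply that ablated networks fail, that intact ones are protected, or that the near-critical $\alpha$ is an obstacle convergent ablated seeds overcome (the ``despite''). Those clauses have to be presented as interpretation of Experiments~1 and~4, not derived.

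One genuine improvement is available in Step~3: the uniformity need not be imported from Experiment~4. For i.i.d.\ masked Gaussian off-diagonal entries, the Bai--Yin law gives $\lVert W_{pp}^{(0)}\rVert_2\approx 2\sqrt{Np_{\rm rec}}\,\sigma$ with $\sigma=0.1/\sqrt N$, i.e.\ $0.2\sqrt{0.15}\approx 0.0775$. Seed-to-seed fluctuations are of relative order $N^{-2/3}$, which also explains why the spectral-norm null result was unavoidable.
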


\subsection{Relationship to Residual Networks}
\label{sec:theory:resnet}

The VEN direct gradient pathway is mathematically analogous to the skip
connections of residual networks~\citep{he2016deep}: a route through which
gradients propagate without passing through the ill-conditioned intermediate
computation.
In ResNets, skip connections bypass depth; in the VENCircuit, they bypass
temporal recurrence.
Proposition~\ref{prop:ven_gradient} formalises this analogy: the VEN pathway
is a biological residual connection whose gradient is structurally bounded
away from the instabilities of the recurrent Jacobian, regardless of the
operating point.

\subsection{Note on Empirical Validation}
\label{sec:theory:validation}

Gradient norm measurements at $W_{pp}$ (Section~\ref{sec:gradnorm}) yielded
a result opposite to the simple prediction: ablated networks showed
\emph{larger} gradient norms than intact networks.
We interpret this as reflecting loss dynamics rather than gradient pathway
capacity: high loss in early ablated training drives large gradient magnitudes
everywhere, including at $W_{pp}$, but these gradients are not informative
enough to produce consistent weight updates.
The mathematical account in Propositions~\ref{prop:attenuation} and
\ref{prop:ven_gradient} remains valid as a structural analysis of pathway
capacity; the empirical result indicates that gradient \emph{magnitude} at
$W_{pp}$ is not the operative readout of the mechanism.
We therefore interpret the VEN learning advantage as likely operating through
gradient \emph{direction consistency} across batches, or through a smoother
loss landscape accessible when the VEN pathway carries early error signal,
rather than through raw gradient magnitude.
This reinterpretation is consistent with the scaffold hypothesis (Section~\ref{sec:discussion})
and is a direct priority for future experimental work; measuring gradient
cosine similarity across mini-batches, or gradient signal at $W_{ip}$,
would constitute a targeted test.

\FloatBarrier

\section{Clinical Predictions: bvFTD and ASC}
\label{sec:clinical}

The quantitative figures cited below (e.g., 30\% failure rate, 20\% inference
sensitivity) are specific to the VENCircuit architecture and training regime.
They should be interpreted as qualitative predictions about the direction and
pattern of effects, not as precise quantitative forecasts for clinical
populations.

\paragraph{Developmental VEN reduction (ASC analog).}
The VEN-ablated-from-birth condition produces unreliable acquisition:
approximately 30\% of networks never learn the task, while 70\% converge to
high accuracy.
Converging networks are indistinguishable from intact networks in final
performance.
This maps onto the clinically observed variability in social skill acquisition
in ASC~\citep{baroncohen1985}: some individuals acquire robust social cognitive
abilities, others do not, and those who acquire them can perform competently
in structured contexts.
The model predicts this variability is not random noise but reflects whether
the recurrent pyramidal circuit at a critical developmental window has
sufficient architecture to learn without VEN scaffolding.

\paragraph{Adult VEN loss (bvFTD analog).}
Inference-time VEN ablation in trained networks produced significant
performance drops (Wilcoxon $p = 0.022$), but with marked heterogeneity:
80\% of networks were unaffected, while 20\% showed meaningful drops including
one catastrophic collapse.
This heterogeneity is itself a prediction: not all individuals with bvFTD
will show equivalent severity of social cognitive decline at the same stage
of VEN loss, depending on how strongly their acquired representations depend
on ongoing VEN activity.
The fraction showing severe effects (20\% in this model) is likely an
underestimate of the biological case due to the architectural redundancy
noted in Section~\ref{sec:ftd}.

\paragraph{The acquisition asymmetry.}
Both conditions share a common mechanism: VENs are a training scaffold.
Their developmental absence makes acquisition unreliable; their adult loss
disrupts representations that depended on them.
The key distinction is timing relative to acquisition:
\begin{itemize}
    \item \emph{Before acquisition} (ASC analog): the scaffold is absent from
    the start; some initialisations fail entirely, others succeed without it.
    \item \emph{After acquisition} (bvFTD analog): the scaffold is removed;
    networks that developed VEN-dependent representations partially or fully
    lose the ability to perform the task.
\end{itemize}
This asymmetry is not predicted by accounts in which VENs act as a static
processing gain; it requires that VENs serve a dynamic role during the
weight-adaptation process.

\section{Discussion}
\label{sec:discussion}

We begin with an explicit statement of scope.
The VENCircuit is a deliberately simplified model: its ``VEN-like'' neurons
are defined by three architectural properties (feedforward-only input,
direct output projection, faster time constant) rather than by any attempt
to recapitulate the full morphological or connectivity profile of biological
VENs.
The convergence advantage we report is a fact about this architecture;
whether the same mechanism operates in biological cortex is a hypothesis
the model motivates, not a result it establishes.
The clinical framing throughout should be read in that spirit: as a set
of model-derived predictions whose value lies in being falsifiable, not as
claims about human neuropathology.

\subsection{Computational Analogy: VEN-like Pathways as Residual Gradient Channels}

The present results support a functional analogy between VENs and
residual connections in deep networks~\citep{he2016deep}.
VENs project directly from input to output-adjacent structures, bypassing
the recurrent pyramidal dynamics.
Proposition~\ref{prop:ven_gradient} shows that this pathway carries a
gradient that is $O(1)$ regardless of the recurrent weight configuration,
while Proposition~\ref{prop:attenuation} shows that the purely recurrent
pathway operates near the critical gradient-flow boundary ($\alpha \approx 1.028$
at initialisation).
The analogy is motivated, though not confirmed, by VEN morphology:
thick, myelinated apical dendrites; sparse local connectivity; long-range
subcortical projections~\citep{allman2011voneconomo} are consistent with
a direct, low-recurrence signalling channel.
Whether this morphology produces a functionally analogous gradient advantage
in cortex is an empirical question the model cannot answer.

\subsection{Acquisition Scaffolds: Why Dispensability After Learning Is a Predicted, Not Post Hoc, Result}

The scaffold hypothesis makes a specific, non-obvious prediction before any
experiment is run: if VENs function during learning to guide weight
configuration rather than to permanently carry the computation, then a
network trained with VENs should be largely able to perform the task
without them.
This is precisely what Experiments 1 and 6 show jointly, and it is why
the combination of results is evidence for the scaffold account rather than
a convenient reconciliation of inconsistent findings.
VENs are largely dispensable
for performance once convergence is achieved.
Successful VEN-ablated networks match intact networks in final accuracy, and
80\% of VEN-intact networks retain full performance when VENs are removed at
test time.
This is not a null result: it tells us that the pyramidal recurrent circuit
is \emph{sufficient} for the computation, but reliably arriving at the
appropriate weight configuration requires the VEN scaffold during learning.

This framing has a specific developmental prediction: the social cognitive
consequences of VEN loss are determined more by its timing relative to the
acquisition window than by the loss itself.
Developmental VEN reduction (as in ASC) impairs the reliability of entering
the learning process.
Adult VEN loss (as in bvFTD) primarily affects networks that co-adapted
strongly to VEN signals; the majority of acquired representations survive.

\subsection{Mechanistic Validation: Honest Assessment}

Two mechanistic predictions of the gradient-pathway account were not confirmed
in the expected direction.
The spectral norm at initialisation was uniform across all 25 measured seeds
($\lVert W_{pp}^{(0)} \rVert_2 \approx 0.078$ for every seed), making
per-seed prediction of failure impossible from this measure.
Gradient norms at $W_{pp}$ were higher in ablated networks, the opposite of
the prediction.
These results indicate that the mechanism does not operate through gradient
magnitude at $W_{pp}$ alone.
We interpret both results as consistent with the broader framework: the
uniform spectral norm confirms universal near-critical dynamics (strengthening
the case for VEN-independent gradient attenuation risk), while the elevated
ablated gradient norms reflect uninformative high-loss gradients rather than
a functioning learning signal.
Future work should measure gradient direction consistency and gradient magnitude
at $W_{ip}$ to test whether the mechanism is visible at those readouts.

\subsection{Relationship to Prior Computational Models}

Previous computational accounts of VEN function have focused on rapid
affective signalling~\citep{allman2005intuition}.
We are not aware of any prior computational model that addresses the role
of VENs in the \emph{learning process} itself.
The present work connects to the literature on credit assignment in biological
networks~\citep{lillicrap2020backpropagation, sacramento2018} and to the
specific challenge of training recurrent spiking
circuits~\citep{bellec2020, zenke2018superspike}.
VENs may represent a biological structural solution to the credit assignment
problem: a direct projection pathway through which error signals propagate
without recurrent attenuation, analogous to the eligibility-propagation
mechanisms proposed in~\citet{bellec2020} but implemented through anatomy
rather than learning rules.

\subsection{Limitations and Future Work}

Several caveats constrain interpretation.
First, and most importantly, VENs in the VENCircuit receive the same input
$\mathbf{x}$ as pyramidal neurons; a bypass-only control (replacing
VEN-to-output connections with direct input-to-output connections of matched
weight count, without any VEN-specific temporal dynamics) was not run.
This means the observed convergence advantage cannot be unambiguously
attributed to the temporal structure of VEN activity rather than to the
presence of an additional direct pathway per se.
This is the single most important follow-up experiment for establishing
whether VEN-specific properties, rather than bypass topology alone, drive
the convergence advantage reported here.
Second, the social classification task uses synthetic burst-modulated Poisson
spike trains; real social cognition involves temporal, multimodal, and
contextual complexity not captured here.
Third, the LIF dynamics use the detached-reset approximation; exact gradient
bounds for hard-reset LIF dynamics may differ slightly from Proposition~1.
Fourth, the mechanistic validation experiments gave unexpected results that
do not straightforwardly confirm the gradient magnitude account, as discussed
above.
Fifth, gradient norm tracking was conducted on 10 seeds (Experiment~5), and
inference-time ablation on 20 seeds (Experiment~6); these sample sizes are
modest and the observed heterogeneity (one catastrophic collapse in 20 seeds)
should be interpreted with appropriate caution pending replication in larger
samples.

Future work should: (i) measure gradient direction consistency and
feedforward input-weight gradient norms; (ii) test the clinical predictions
using patient-derived iPSC organoids in which VEN density can be modulated;
(iii) model the non-redundant subcortical projection pathway of biological
VENs to better capture the FTD analog; and (iv) extend the architecture to
hierarchical cortical circuits to test whether the acquisition scaffold role
generalises across tasks.

\section{Conclusion}
\label{sec:conclusion}

We have shown that a biologically motivated spiking neural network model
equipped with Von Economo neuron-like projection neurons is 21 times more
likely to successfully acquire a social cognitive representation than an
otherwise identical network without VENs
(OR $= 21.0$, 95\% CI 2.7--167, Fisher's exact $p = 8.7 \times 10^{-5}$).
This advantage does not reflect increased network capacity but a specific
architectural property: VENs provide a direct pathway that bypasses the
recurrent Jacobian instabilities of the pyramidal circuit, offering a
learning scaffold that is mathematically derivable and empirically consequential.
Empirical gradient measurements indicate that the operative mechanism is
likely gradient direction consistency rather than magnitude, a distinction
that constitutes the primary target for follow-on mechanistic work.
Inference-time VEN ablation produces statistically significant but
heterogeneous performance drops, consistent with a subset of networks
developing VEN-dependent output representations during training.
The resulting framework predicts that the social cognitive consequences of
VEN loss are determined by its timing relative to the developmental acquisition
window, generating specific and falsifiable hypotheses for neuroimaging,
organoid, and electrophysiology research.

\section*{Data and Code Availability}

All experiment code and result files are available at:
\url{https://github.com/esila-keskin/VENCircuit}

\appendix

\section{Hyperparameters}
\label{app:hyperparams}

\begin{table}[htbp]
\centering
\caption{VENCircuit training hyperparameters for all main experiments
(Experiments 1--6). All experiments use the same configuration unless
otherwise noted.}
\label{tab:hyperparams}
\renewcommand{\arraystretch}{1.15}
\begin{tabular}{llll}
\toprule
Parameter & Symbol & Value & Note \\
\midrule
\multicolumn{4}{l}{\textit{Architecture}} \\
Pyramidal neurons          & $N$              & 2{,}000   & \\
VEN neurons (intact)       & $K$              & 40        & $= 0.02N$ \\
Pyramidal fan-in           &                  & 80        & sparse $W_{ip}$ \\
Pyramidal recurrent prob.  & $p_{\rm rec}$   & 0.15      & sparse $W_{pp}$ \\
VEN fan-in                 &                  & 8         & sparse $W_{iv}$ \\
Input dimension            & $d$              & 100       & input neurons \\
\midrule
\multicolumn{4}{l}{\textit{LIF dynamics}} \\
Pyramidal time constant    & $\tau_{\rm pyr}$ & 20\,ms    & $\beta_{\rm pyr}=0.95$ \\
VEN time constant          & $\tau_{\rm VEN}$ & 5\,ms     & $\beta_{\rm VEN}=0.80$ \\
Pyramidal / VEN threshold  & $\theta$         & 0.50      & \\
Output threshold           & $\theta_{\rm out}$& 0.10     & WTA readout \\
Surrogate gradient         &                  & ATan      & Eq.~\eqref{eq:atan_surr}\\
\midrule
\multicolumn{4}{l}{\textit{Task}} \\
Simulation timesteps       & $T$              & 50        & at 1\,ms resolution \\
Training samples           &                  & 6{,}000   & \\
Validation samples         &                  & 1{,}000   & \\
Test samples               &                  & 2{,}000   & \\
Batch size                 &                  & 64        & \\
\midrule
\multicolumn{4}{l}{\textit{Training}} \\
Training epochs            &                  & 50        & \\
Learning rate              & $\eta$           & $1\times10^{-3}$ & Adam initial LR \\
LR scheduler               &                  & CosineAnnealingLR & $T_{\max}=50$ epochs \\
Weight decay               & $\lambda$        & $1\times10^{-5}$ & $L_2$ \\
Gradient clip              &                  & 1.0       & global norm \\
Optimiser                  &                  & Adam      & \citep{kingma2015adam} \\
Convergence criterion      &                  & val $\geq 0.80$ & \\
\midrule
\multicolumn{4}{l}{\textit{Weight initialisation ($\sigma = 0.1/\sqrt{N}$)}} \\
$W_{ip}$ (pyramidal input) &                  & $\mathcal{N}(0,\sigma)$ & sparse \\
$W_{pp}$ (pyramidal recur.)&                  & $\mathcal{N}(0,\sigma)$ & sparse \\
$W_{po}$ (pyr.\ output)    &                  & $\mathcal{N}(0,0.05)$   & \\
$W_{iv}$ (VEN input)       &                  & $\mathcal{N}(0,2\sigma)$ & sparse \\
$W_{vo}$ (VEN output)      &                  & $\mathcal{N}(0,0.03)$   & \\
\bottomrule
\end{tabular}
\end{table}

\FloatBarrier

\section{STDP Credit Assignment: Preliminary Negative Result}
\label{app:stdp}

Prior to the training-reliability analysis, we tested whether VEN synapses
generate stronger per-synapse reward-modulated STDP credit signals than
pyramidal synapses.
Networks were trained in two phases: (1) full backpropagation training for
20 epochs to establish input representations, then (2) output
weights $W_{po}$ and $W_{vo}$ were reinitialised and trained via
reward-modulated STDP while input and recurrent weights were frozen.
Across five VEN-intact seeds and five VEN-ablated seeds, STDP learning
on $W_{vo}$ and $W_{po}$ produced chance-level final accuracy in both
conditions (VEN-intact: $0.40 \pm 0.16$; VEN-ablated: $0.48 \pm 0.03$;
$p = 0.38$, two-sample $t$-test), with no difference in learning speed.
The hypothesis that VENs accelerate output credit assignment via STDP
was not confirmed.
This null result redirected the investigation toward training reliability
under standard backpropagation, which revealed the convergence asymmetry
reported in Section~\ref{sec:results:fisher}.

\bibliographystyle{abbrvnat}
\bibliography{references}

\end{document}